\newcommand{\R}{{\mathbb R}}
\newcommand{\bw}{\mathbf{w}}
\newcommand{\by}{\mathbf{y}}
\newcommand{\ba}{\mathbf{a}}
\newcommand{\cS}{\cal S}
\newcommand{\argmin}{\mathrm{argmin}}
\newcommand{\otherwise}{\mathrm{otherwise}}
\renewcommand{\Re}{\mathbb{R}}
\newtheorem{theorem}{Theorem}
\newtheorem{definition}[theorem]{Definition}
\newtheorem{remark}[theorem]{Remark}
\title{An algorithm for a constrained P-spline}
\date{\today}
\author[1]{Rosanna Campagna\thanks{rosanna.campagna@unicampania.it}}
\author[1]{Serena Crisci}
\author[2]{Gabriele Santin}
\author[1]{Gerardo Toraldo}
\author[3]{Marco Viola}
\affil[1]{Department of Mathematics and Physics,   {University of Campania
 ``L.~Vanvitelli''},  {Caserta},  {Italy}}
\affil[2]{Department of Environmental Sciences, {Informatics and Statistics, Ca'~Foscari University of Venice}, {Venice}, Italy}
\affil[3]{School of Mathematical Sciences, {Dublin City University}, {Dublin}, Ireland}
\begin{document}
\maketitle

\begin{abstract}
Regression splines are largely used to investigate and predict data behavior, attracting the interest of mathematicians for their beautiful numerical properties, and of statisticians for their versatility with respect to the applications. Several penalized spline regression models are available in the literature, and the most commonly used ones in real-world applications are P-splines, which enjoy the advantages of penalized models while being easy to generalize across different functional spaces and higher degree order, because of their discrete penalty term. 
To face the different requirements imposed by the nature of the problem or the physical meaning of the expected values, the P-spline definition is often modified by additional hypotheses, often translated into constraints on the solution or its derivatives. 
In this framework, our work is motivated by the aim of getting approximation models that fall within pre-established thresholds. Specifically, starting from a set of observed data, we consider a P-spline constrained between some prefixed bounds. In our paper, we just consider 0 as lower bound, although our approach applies to more general cases. We propose to get nonnegativity by imposing lower bounds on selected sample points. The spline can be computed through a sequence of linearly constrained problems. We suggest a strategy to dynamically select the sample points, to avoid extremely dense sampling, and therefore try to reduce as much as possible the computational burden. We show through some computational experiments the reliability of our approach and the accuracy of the results compared to some state-of-the-art models.

\noindent\textbf{Keywords:} Regression splines, Constrained splines, P-splines,  Algorithms for first-order approximation.

\noindent\textbf{Mathematics Subject Classification (2000):} 65D07,  65D10,  65D15.
\end{abstract}

\section{Introduction}
\label{intro}

In the context of data analysis, splines are powerful and versatile nonparametric models, used for different purposes, e.g.,  building statistical distributions, 
investigating data structures, describing and predicting data behaviour  \cite{Schoenberg1964,Lyche2018,wang2011smoothing}.
 
Given a set of observed  data $(x_i,Y_i),\,\, i = 1,\ldots,m$,   with $Y_i$  assumed to be some 
response of an unknown function   $g:[a,b] \subset \R \to \R$,
  corrupted by i.i.d. zero-mean Gaussian noise, i.e.,
\[Y_i= g(x_i) + \nu_i,\quad  i = 1,\ldots,m,\]
where $\nu_i \sim N(0,\bm{\sigma}_g)$ and $\bm{\sigma}_g$ is the standard deviation of $\nu_i$,
a cubic {\em regression spline} $s$  
is defined as a minimizer of the sum of the squared errors, that is
 \begin{equation}
    \label{rp}
 \underset{s\in \mathcal{S}}{\min }\sum_{i=1}^{m}(Y_i-s(x_i) )^2,\end{equation}  
 or of a penalized least squares function
\begin{equation}
   \label{P1} 
\underset{s\in \mathcal{S}}{ \min}\sum_{i=1}^{m}\omega_i(Y_i-s(x_i) )^2+\lambda^2\mathcal{P}(s), 
\end{equation}
 with  $\omega_1,\ldots,\omega_m$ some given  weights and $\lambda\neq 0$   regularization parameter. Here $\mathcal{S}$ is a spline space spanned by a suitable basis.
In (\ref{P1}) the residual sum of squares is balanced by a suitable  penalty term $\mathcal{P}(s)$ expressed in terms of an integral or differential operator 
  applied to  $s$ and/or its 
  derivatives. In this last case, the spline is often called a {\em penalized regression spline}.
 The definition of the penalty   characterizes the  spline model.
Different    penalized regression splines   
are available in the literature, e.g.   O-splines \cite{OSullivan}, P-splines \cite{EilersMarx96}, and T-splines  \cite{Ruppert2000}. They strongly differ from the  classical 
{\em smoothing spline} (due to Whittaker   \cite{Whittaker1923}  and then studied by Schoenberg \cite{Schoenberg1964}, and for which Reinsch \cite{Reinsch1967} proposed a well-known algorithm in 1967)  
which is 
 the solution of the {penalized least-squares problem}  
  \[\underset{{f\in W_2^p[a, b]}}{\mathrm{argmin}} \;\;\sum_{i=1}^m (y_i-f(x_i))^2 + \lambda^2 {\int_a^b (f^{(p)} )^2 dx}\]  
in the {Sobolev space}
\[W_2^p[a,b]= \left\lbrace f:[a,b]\to\R \,:\, f, f' , \ldots, f^{(p-1)} \,\, \textrm{absolutely cont.}, s.t. \,\, \int_a^b (f^{(p)})^2 d x < \infty\right\rbrace. \]

The main weakness of this model is the coincidence of nodes and smoothing data which, if
numerous, as often happens in practice,  
can cause overfitting and model oscillations.

In contrast, penalized regression splines are defined by nodes possibly independent
from the data, and this assumption makes them more flexible and suitable for data analysis.      
  
 We focus on P-splines \cite{EilersMarx96}, which are 
 among the most successfully smoothers, used in a wide range of applications. A survey  of   the main developments on  
P-splines   obtained in the last two decades 
is due to 
Eilers and Marx \cite{Eilers2015TwentyYO}, who in 1996 introduced this family of splines with penalty defined through B-splines with
  equally-spaced knots  \cite{EilersMarx96}; the authors  discard the derivative 
  to express the roughness term  as the sum of squares of differences of coefficients, making it extremely  easy to compute.
  
  P-splines  inherit reproducing {properties from B-splines}  and enjoy advantages of {regression models}, avoiding the problems of overfitting and consequent oscillations at the edges. 
In the applications, to face the different requirements imposed by the nature of the problem or the physical meaning of the expected values,  the P-splines  
  can be   {generalized to non-polynomial bases \cite{CC2021,CC2022,CCC2023}}, or constrained by   specific penalty terms forcing the model to be, e.g., monotone or positive. 
 In \cite{HAUTECOEUR2020256}  an  application is described  where the fitting  by {\em positive} cubic splines is introduced  to extract  sparse and meaningful features from a set of nonnegative data vectors, while preserving their positive nature.
In \cite{Hautecoeur2020ImageCV} the same authors   prove that the functional form of a discrete dataset, through a linear combination of  B-splines, enforces some intrinsic features on the recovered data, such as smoothness, particularly useful in inpainting or, more generally, image recovery problems.  
 Since the B-splines are positive by 
 nature \cite{de1978practical}, the positiveness is simply deduced by the projection of the coefficients in a nonnegative space.
A conical combination of basis functions  with a similar aim is also  proposed in \cite{Zdunek2014BSplineSO}.
One must observe, however, that nonnegativity of the coefficients is only a sufficient condition to ensure the nonnegativity of a spline \cite{Boor1974SplinesWN} and that it could be unnecessarily restrictive and result in an increase of the approximation error. The main goal of this work is to present a way to overcome such requirement for the construction of a nonnegative spline.

The design of shape constrained models, motivated by many applications, is achieved with different strategies; in \cite{MaturanaMeyer2020,Schellhase2012} the nonnegativity of P-splines is achieved through a data-driven {node selection} based on statistical techniques, while in \cite{Schellhase2012} 
the spline fitting model    is forced to be positive through a coefficient   parametrization. 
Similar approaches can be found also in \cite{BIT1}
with an algorithm for computing a smoothing polynomial spline with suitably prescribed constraints on the derivatives;  semi-infinite constraints are replaced by finite ones, ending up with a least squares problem with linear inequality constraints.  
Finally, \cite{BIT2} presents a method that combines shape preservation and least squares approximation by splines with free knots, leading to a nonlinear least squares problem in both the coefficients and the knots. 

This work is fundamentally motivated by the objective of getting approximation models that fall within pre-established thresholds. Specifically, starting from a set of observed data, we propose a constrained P-spline model aimed to build an approximating function bounded from some prefixed bounds.
 {To facilitate the description we will focus only on the case of lower bounds providing some hints on how the method can be extended to the case of upper bounds on the spline.}
Moreover, in our work we consider 0 as lower bound, although the proposed approach applies to any bound, different bounds on different intervals.

We recall that there is extensive literature on numerical analysis and approximation theory that tackles the problem of providing upper bounds on several types of approximants or interpolants. 
This is usually achieved by deriving asymptotic bounds on the growth of the so-called Lebesgue constant of the approximation process, see e.g. \cite{bos2012klein,bos2013sidon,bandiziol2019de}. 
We are however interested in bounds on the approximant both from below and from above, and especially we require that these bounds hold for any dataset, and not just in the asymptotic limit as the number of data grows to infinity. 
To this end, we introduce an algorithm and prove a bound that are both constructive and expressed in terms of the magnitude of the coeffiecients of the appproximant.

B-splines are positive (bell-shaped) and compactly supported functions \cite{de1978practical}, and thus nonnegativity can be trivially imposed by considering only their combinations with nonnegative coefficients. However, as mentioned before this approach may prove too and unnecessarily restrictive. As an alternative we propose to
introduce lower bounds on the solution of problem (\ref{P1}) on selected sample points    that can be formulated as linear constraints on the objective function. 
 This procedure  may  result reductive and unsatisfactory to guarantee  the nonnegativity of the model since it ensures the  nonnegativity only on the sample points. 
A  strong limitation of this approach is that obtaining an acceptable result may require extremely dense sampling, thus involving a heavy computational burden. To overcome this drawback we propose a strategy in which the set of sampling points   is updated dynamically. Our algorithm involves the alternation of two phases, one to add sampling points in critical areas (feeding), the other to eliminate unnecessary nodes from the set  (pruning).

Similarly to the P-splines,  higher orders models are straightforward to be deduced,  since the higher order differences in the penalty term are  easily defined.
 
A theoretical result is provided based on a suitable Lipschitz constraint condition that guarantees the positivity of our model, said   CP-spline, over an entire interval $[a, b]$.

The paper is organized as follows: in Section \ref{sec:1} the constrained model 
 is introduced and  justified; the local nature is described, together with the motivation of dynamic local bounds. 
 The algorithm is finally illustrated, equipped with procedures to prune the updating dataset and to drive the monotone behavior of the computed solution. A theoretical result about the global sign is also presented. Section \ref{sec:2} presents 
 numerical experiments that confirm the reliability of the model, the robustness of the algorithm, and the accuracy of the results compared to state-of-art models. Conclusions are drawn in Section \ref{sec:3}.

\section{CP-spline: a constrained P-spline  model}
\label{sec:1}

In this section we first introduce P-splines, building our definition upon the assumptions and simplifications of the model introduced by \cite{EilersMarx96}. Then, we
define a CP-spline as a constrained P-spline model. Finally, we focus on a possible {algorithm for adaptive  selection} of additional points to the data set, with the aim of forcing the model to be   bounded from below  in the data domain.   For the sake of simplicity, only the problem of keeping the spline nonnegative will be considered hereinafter. We will refer to the P-spline as the solution to problem (\ref{P1}), where the penalty is expressed in terms of the coefficients.
 

\begin{definition}[{P-Spline}]\label{def1}
Given a set of {noisy data $(x_i,Y_i), i=1,\ldots ,m$}, $a=x_1<x_2 <\ldots <b=x_m$, in $[a,b]$ and 
a {cubic} B-spline basis $\{B_j\}_{j=1,\dots, n}$ of     dimension $n$, defined on the augmented set of {$n+6$ uniformly distributed knots},  $\{\xi_{-1},\ldots , \xi_{n+2} \}$
with $\xi_2\equiv x_{1}$   and $\xi_{n-1}\equiv x_{m}$, and  introducing    additional  knots   $\xi_{-1}<\xi_{0}<\xi_{1}<a$ and $b<\xi_{n}<\xi_{n+1}<\xi_{n+2}$,
in order to   span  the spline space   ${\cS}$, {we define} the P-spline $s(x)\in {\cS}$
  \begin{equation}
     \label{defs} s(x)=\sum_{j=1}^{n} a_j B_j(x),
 \end{equation}
{as the spline} whose coefficients are the solution of  
 \begin{equation}
   \label{pspline} 
 \underset{\ba\in {\mathbb R}^{n}}{\arg\min}
 \sum_{i=1}^{m}\omega_i(Y_i-s(x_i) )^2+\lambda^2  \sum_{j=3}^{n} \left((\Delta^2 \ba)_j\right),
\end{equation}
with   weights $\omega_1,\ldots, \omega_m$, a regularization parameter $\lambda$, and where
 \[(\Delta^2 \ba)_j=a_j-2a_{j-1}+a_{j-2},\quad { j=3,\ldots, n}.
 \]
 \end{definition}

 If we set  ${\bf y} \equiv (Y_1,\hdots ,Y_m)$, $ {\mathbf W}=\mathrm{diag} (\bw)$ with $\bw \equiv(\omega_1, \hdots ,\omega_m)$, ${\mathbf V}=\sqrt{{\mathbf W}}$, ${\mathbf D} \in \Re^{n \times n}$ as the tridiagonal matrix
 \[
{\mathbf D}=
\begin{pmatrix}
-2 & 1 & 0 & \hdots & 0      \\
1 & -2 & 1 &\hdots      & 0 \\
\vdots &  \hdots   &  \ddots  &   \hdots    & \vdots\\ 
0 &  \hdots & 1 & -2 &       1   \\
0 &    \hdots &  \hdots  &      1 & -2
\end{pmatrix},
\]
and   ${\bf B}:=(B_j(x_i))_{i=1,\ldots,m}^{j=1,\ldots,n}$ with band {structure}  inherited by the B-splines locality, 
 problem \eqref{pspline} can be written as
{\small \begin{equation}\label{pspline1}
\underset{\ba\in {\mathbb R}^{n}}{\arg \min}
 \| {\mathbf V} {\mathbf B} \ba - {\mathbf V} \by \|^2 + \lambda^2 \ba^T {\mathbf D}^T {\mathbf D}\ba
 \; \Leftrightarrow \\
\underset{\ba\in {\mathbb R}^{n}}{\arg \min}
 \frac{1}{2}\ba^T\left(
{\mathbf B}^T{\mathbf W}
{\mathbf B}+ \lambda^2  
{\mathbf  D}^T
{\mathbf D}\right)\ba
-\left({\mathbf B}^T{\mathbf W}{\mathbf y} \right)^T \ba.
 \end{equation}}
The solution of the strictly convex quadratic  problem \eqref{pspline1}  can be computed  solving the linear system  {of normal equations}
\begin{equation}\label{expansion2}
 \left(
{\mathbf B}^T{\mathbf W}
{\mathbf B}+ \lambda^2  
{\mathbf  D}^T
{\mathbf D}\right)\ba={\mathbf B}^T{\mathbf W}{\mathbf y}.
\end{equation} 
 
  \medskip

 Now we define the CP-spline and we focus on a possible {adaptive selection} of its defining sequence of points.

 \medskip  
 
  {
\begin{definition}[{CP-spline}]\label{defNNP}
Let   $s$ be a P-spline (see   Definition~\ref{def1}), let furthermore $Z=\{z_1,z_2,\ldots,z_p\}\subset[a,b]$ and $\mathcal{E}= \{\varepsilon_1, \ldots, \varepsilon_p\}\subset[0,+\infty)$.
Then   a \textit{constrained P-spline} ({CP-spline}),  associated to the set of points $Z$ (referred to as ``sampling points'') and  to the forcing sequence $\mathcal{E}$, is the function referred to as  $s_{Z, {\mathcal E}}=\sum_{j=1}^{n} a_j B_j(x)$, 
whose coefficients solve the problem~\eqref{pspline} with   constraints:
\begin{eqnarray}
   &&  \underset{\ba\in {\mathbb R}^{n}}{\arg\min}
 \sum_{i=1}^{m}\omega_i(Y_i-s(x_i) )^2+\lambda^2  \sum_{j=3}^{n} \left((\Delta^2 \ba)_j\right),\nonumber\\
&&\sum_{j=1}^n a_j B_j(z_i)\geq \varepsilon_i \geq 0,\quad i=1,\ldots,p.   \label{constrpspline}
\end{eqnarray}
with   weights $\omega_1,\ldots, \omega_m$, a regularization parameter $\lambda$, and where
 \[(\Delta^2 \ba)_j=a_j-2a_{j-1}+a_{j-2},\quad { j=3,\ldots, n}.
 \]
 \end{definition}
}
 
 Because of \eqref{constrpspline}, the function $s_{Z, {\mathcal E}}(x)$ is forced to be nonnegative in all points of the set $Z$. Actually, the ultimate goal of the constraints  \eqref{constrpspline} is to keep $s_{Z, {\mathcal E}}(x)$ nonnegative over the entire interval $[a,b]$. 
 A  {straightforward} approach would suggest taking as $Z$  grids of points equally spaced in the interval $[a,b]$, finer and finer until the result appears to be satisfactory according to some prefixed criterion.
 However a twofold drawback advises against such an approach.  A very large number of points in $Z$, in the first place, makes the optimization problem computationally expensive. Secondly it could foster
  {overfitting} and oscillatory behaviour of the CP-spline. Therefore, the set $Z$ should be reasonably small, with carefully chosen points  able to favor, through the constraints \eqref{constrpspline}, the nonnegativity of the P-spline. We propose to select a suitable set $Z$ through an iterative algorithm.  Given a current set $Z$ (with a forcing sequence $\mathcal{E}$) the idea is to update it
 by (possibly) removing some points and adding new ones in those regions in which the spline is likely to become negative. This update must be followed by the forcing sequence update. Before describing our strategy,
  {we consider} the active set $\mathcal{A}
 = \{\alpha_1, \alpha_2 \hdots ,\alpha_p\}$,
 where  the  activity index  $\alpha_i \in \{0,1\}$ of  $z_i$, $i=1,...,p$, is  defined as follows:
 \begin{eqnarray}
 \alpha_i =\left\{
 \begin{array}{cc}
    1,  &  \mathrm{if} \; s_{Z, {\mathcal E}}(z_i) = \epsilon_i,  \\
    0, & \otherwise.
\end{array} 
\right.
 \end{eqnarray}

The kernel of the proposed  algorithm, which we refer to as   Algorithm~\ref{alg-CPspline}, consists of the computation of the CP-spline and the 
management of the updating and/or pruning  of the defined  subsets $Z$, $\mathcal{E}$ and $\mathcal{A}$.
More in details, Algorithm \ref{alg:update-Z-A}    outlines
 the strategy to update   $Z$   and $\mathcal{A}$ by adding  new points at each iterate\footnote{{In Algorithm \ref{alg:update-Z-A} with the symbol $\cup$ we refer to {\em union and sorting}}.}.

\begin{algorithm}
\caption{CP-spline}
\label{alg-CPspline} 
\begin{algorithmic}[1]
\State\underline{Input}:
\State $\Xi=\{\xi_{-1},\ldots, \xi_{n+2}\}$, with  $\xi_{i}<\xi_{i+1}$, uniform set of  knots  with $\xi_{2}\equiv a$, $\xi_{n-1}\equiv b$.
\State $B_1(x),B_2(x),\hdots, B_n(x)$ {cubic B-splines} defined on $\Xi$ 
\State $(x_i,Y_i), i=1,\ldots ,m$ set of     noisy data with $x_i \in [a,b]$
\State $Z=\{z_1, \hdots, z_p\} \subseteq [a,b]$ with $z_i<z_{i+1}$, $\mathcal{A}=\{0,\dots, 0\}$
\State {\bf repeat}
\State \hspace{2mm} compute the CP-spline $s_{Z, {\mathcal E}}(x)$ solving    \eqref{pspline} under  constraints   \eqref{constrpspline}  
\State  \hspace{2mm} {\bf update} $Z,\mathcal{A}$
\State  \hspace{2mm} {\bf prune} $Z,\mathcal{A}$
\State  \hspace{2mm} {\bf update} $\mathcal{E}$ 
\State {\bf until} {stop condition}
\State \underline{Output} 
\State  $Z$,   $\mathcal{E}$, $s_{Z, {\mathcal E}}(x)$ 
\end{algorithmic}
\end{algorithm}

 \begin{figure}[h]
    \includegraphics[width=\textwidth]{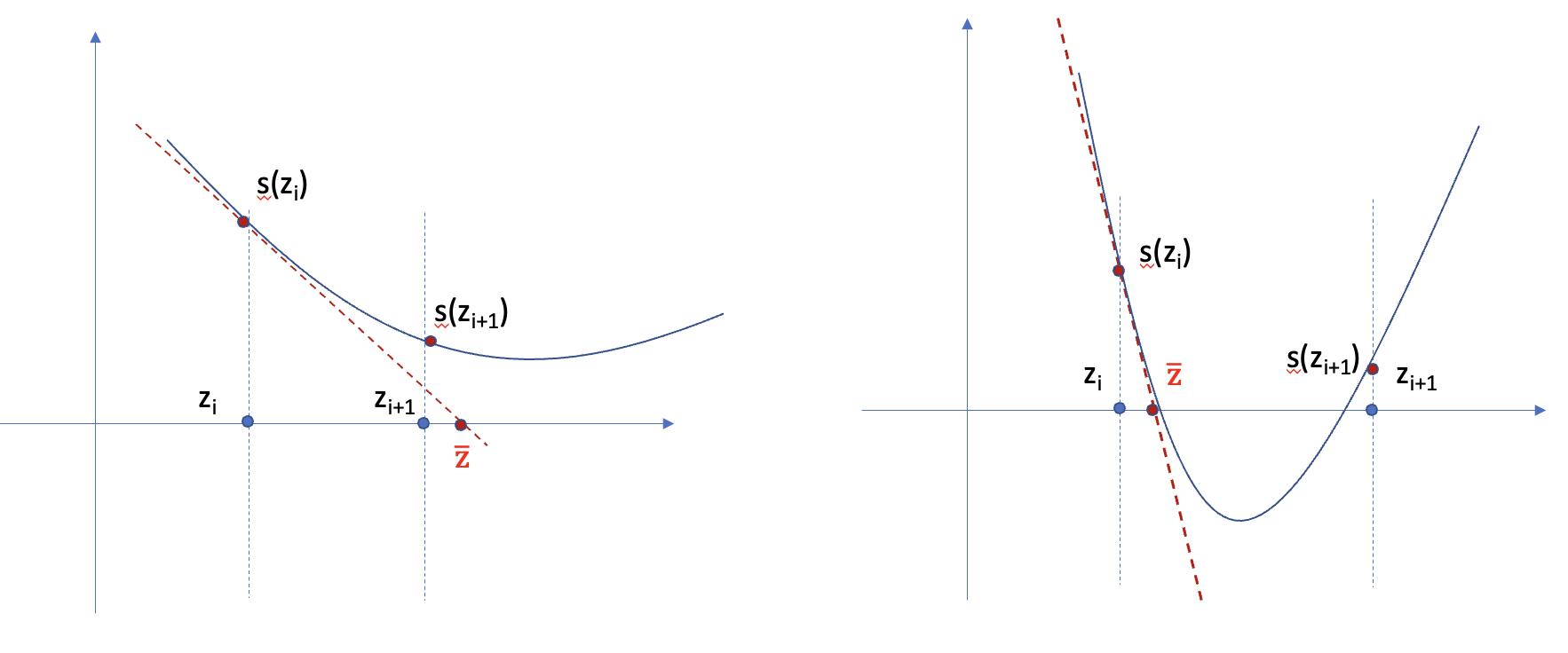}
    \caption{Schematic illustration of the update step in Algorithm~\ref{alg:update-Z-A}.}\label{fig:1}
\end{figure}

\begin{algorithm} 
\caption{{update}   $Z, \mathcal{A}$}
\label{alg:update-Z-A}
  \begin{algorithmic}[1]
\State \underline{Input}:
\State $Z=\{z_1, \hdots, z_p\} \subseteq [a,b]$ with $z_i<z_{i+1}$, $\mathcal{E}=\{\varepsilon_1, \hdots, \varepsilon_p\} \subseteq\Re_0^+$  
\State $s_i= s_{Z, {\mathcal E}}(z_i)$,  $i=1,\ldots,m-1$
\State ${s'}_i= {s'}_{Z, {\mathcal E}}(z_i)$,    $i=1,\ldots,m-1$ 
\State $\mathcal{A}= \{\alpha_1, \alpha_2 \hdots \alpha_p\}$  where $ \alpha_i =\left\{
 \begin{array}{cc}
    1  &  \mathrm{if} \; s_i = \varepsilon_i  \\
    0 & \otherwise 
\end{array}
\right.
$
\State $\bar{Z}=\emptyset$, $\bar{\mathcal{A}}=\emptyset$
\For{$i=1,\ldots,m-1$}  
\If{$\quad s_i=\varepsilon_i $ {\bf and} $s_{i+1}=\epsilon_{i+1}$} 
\State   $\bar{z}=(z_i+z_{i+1})/2$
\If{$\quad s'_i<0 $ {\bf and} $s'_{i+1}>0\quad$} 
  {$\bar{Z} = \bar{Z}\cup \{\bar{z}\}; \;\; \bar{\mathcal{A}}=  \bar{A} \cup \{ 1 \}$}
  \EndIf
\ElsIf{$s_i>s_{i+1}$ and $s_{i}>\varepsilon_{i}$}
\State  $\bar{z}=z_i-s_i/s'_i$
\If{$\quad \bar{z} \in (z_i,z_{i+1})\quad$} 
  {$\bar{Z} = \bar{Z}\cup \{\bar{z}\}; \;\; \bar{\mathcal{A}}=  \bar{A} \cup \{ 1 \}$}
\EndIf
 \ElsIf{$\quad s_i<s_{i+1}$   and $s_{i+1}>\varepsilon_{i+1} $}
\State $\bar{z}=z_{i+1}-s_{i+1}/s'_{i+1}$ 
\If{$\quad \bar{z} \in (z_i,z_{i+1})\quad$} 
{$\bar{Z} = \bar{Z}\cup \{\bar{z}\}; \;\; \bar{\mathcal{A}}=  \bar{A} \cup \{ 1 \}$}
\EndIf 
\EndIf
\EndFor
\State \underline{Output}
\State $Z=Z \cup \bar{Z}=\{z_1, \hdots, z_q\}$ with $z_i<z_{i+1}$; 
$\mathcal{A}=\mathcal{A} \cup \bar{\mathcal{A}}=\{\alpha_1,  \hdots, \alpha_q\} $
\end{algorithmic}
\end{algorithm}

The update algorithm adds points in those intervals in which the spline $s_{Z, {\mathcal E}}(x)$ is likely to be negative. This is done for each interval $(z_i,z_{i+1})$ in a twofold manner. If     $\alpha_i = \alpha_{i+1} = 1$, then the spline may become negative in the interval $(z_i,z_{i+1})$; in this case, the midpoint of the interval is added to $Z$ if the spline has negative derivative in $z_i$ and positive in $z_{i+1}$. In the other case, for each interval $(z_i,z_{i+1})$ one computes the root $\bar{z}$
of the first order Taylor's expansion around one of the extreme points, and adds it to $Z$ if $\bar{z} \in (z_i,z_{i+1})$ (see Figure \ref{fig:1}).

The algorithm 
may present some drawbacks, namely, 
adding too many (unnecessary) points in those regions in which the spline is close to the x-axis and
fostering oscillatory behavior in such regions. About this last issue,
 we define  $\mathcal{M}=\{j,\ldots, j+q\}$, with $j \geq 1$ and
$j+q \leq p$,     {\em maximal active set} if 
$\alpha_i = 1 $ for $i \in \mathcal{M}$ and $\alpha_i=0$ for $i=j-1$ (if $j>1$) and $i=j+q+1$ (if $j+q<p$). Moreover we denote
with
 $\mathcal I^\mathcal{M}$ the interval $[x_{j-1},x_{j+q+1}]$.
  We note that if $\mathcal{M}$ is a maximal active set, then  the constraints \eqref{constrpspline} are active for     $z_i \in \mathcal{I}^\mathcal{M}$; to smooth out the zigzagging behavior, we suggest to take for the corresponding $\varepsilon_i$ small increasing [decreasing] values if the spline is likely to be increasing [decreasing]. To keep the number of points of ${Z}$   under control 
we also fix a maximum number $\mu$ of points into any maximal active set $\mathcal{M}$. Our strategy is outlined in Algorithm~\ref{alg:3}. To further reduce the number of elements in $Z$ the pruning strategy is combined with a simple dropping one, based on the magnitude of $s_{Z, {\mathcal E}}(z_i)$.

\begin{algorithm} 
\caption{{prune}  $Z, \mathcal{A}$}\label{alg:3}
\begin{algorithmic}[1]
\State \underline{Input}: 
$Z=\{z_1, \hdots, z_p\} \subseteq [a,b]$ with $z_i<z_{i+1}$, 
  $\mathcal{E}=
\{\varepsilon_1, \hdots, \varepsilon_p\} \subseteq\Re_0^+$,  \hspace{3cm} $\mathcal{A}
 = \{\alpha_1,   \hdots, \alpha_p\}, \; \alpha_i \in \{0,1\}$,  $\mu\in  \mathbb{N}$, $\bar{\epsilon}\in \Re^+$
\For{$i=1,\ldots,m-1$}
\State  $s_i= s_{Z, {\mathcal E}}(z_i)$
\If{$\quad s_i>\bar{\epsilon}$}  
 \State $Z = Z \setminus \{z_i\}$
\EndIf
\EndFor
\For{$\mathcal{M}=\{j,\hdots, j+q\}$, with $j \geq 1$ and $j+q \leq p$}
\If{$q> \mu$}
\State   {\bf replace} the points $z_j,z_{j+1}\hdots z_{j+q}$ with
$z_j+kh$,  
$k=1,\hdots, \mu$, 
$h=(z_{j+q}-z_j)/\mu$
 \State  {\bf update} consequentially 
$\mathcal{A}$ with activity index $\alpha_i=1$ for the new points
\EndIf
\EndFor
\State \underline{Output:} updated $Z$ and  $\mathcal{A}$
\end{algorithmic}
 \end{algorithm}
 
\begin{algorithm} 
\caption{update    $\mathcal{E}$}
 \label{alg:updateE}
\begin{algorithmic}[1]
\State \underline{Parameters}: $\epsilon$ 
\State \underline{Input}: $\mathcal{E}=\{\varepsilon_1, \hdots, \varepsilon_p\} \subseteq\Re_0^+$, 
$\mathcal{A}= \{\alpha_1,   \hdots, \alpha_p\}, \; \alpha_i \in \{0,1\}$
\State $s_i= s_{Z, {\mathcal E}}(z_i)$,  $i=1,\ldots,m-1$
\For{$\mathcal{M}=\{j,\hdots, j+q\},\ j \geq 1, \ j+q \leq p$} 
\If{$s_{j-1} > s_{j+q+1}$}  
\State \hspace{3mm} {\bf set} 
$\varepsilon_{i+i}=(i+1) \epsilon, \; i=0,\hdots, q$ 
\Else
\State \hspace{3mm} {\bf set} 
$\varepsilon_{i+i}=(q-i+1)\epsilon, \; i=0, \hdots, q$
\EndIf
\EndFor
\State \underline{Output:} updated  $\mathcal{E}$
\end{algorithmic}
\end{algorithm}

{Summing up}, the algorithm computes a finite sequence of CP-splines w.r.t. to a set $Z$ that is gradually modified by adding and removing points to foster the spline  nonnegativity. 
The \emph{update} steps aim to locate areas in which the spline tends to be negative, whereas the \emph{prune} steps avoid over-concentration of points in such areas. We note that the sequence
of minimization problems of the form \eqref{pspline} share the same objective function (i.e., \eqref{pspline}) and differ only in the constraints (\ref{constrpspline}).  This allows one to reduce the computational cost by using the solution of the previous problem to warmstart the optimization.

 {
\begin{remark}
    Though the algorithm was presented for the case of nonnegatively constrained splines (i.e., lower bound equal to 0), it can be easily extended to the case of generic lower and upper bounds. In detail, when it comes to upper bounds, it is sufficient to observe that one can introduce a set of sampling points $Z^u=\{z^u_1,z^u_2,\ldots,z^u_k\}\subset[a,b]$ for the upper bounds resulting in the introduction into \eqref{constrpspline} of inequality constraints of the form
    $$\sum_{j=1}^n a_j B_j(z^u_i)\leq U_i ,\quad i=1,\ldots,k,$$
    for given upper bounds $U_i$.
    The set of upper-bound sampling points $Z^u$ can then be updated by using procedures analogous to the ones described in Algorithms~\ref{alg:update-Z-A}--\ref{alg:updateE}.
\end{remark}}

We conclude this section  by stating a  constructive condition that can be used to verify a-posteriori that the computed CP-spline is positive.

\begin{theorem}
Under the assumptions of Definition~\ref{defNNP}, let $s_{Z,\mathcal E}$ be the CP-spline associated to $Z$ and $\mathcal E$.
Define $z_0:=a$, $z_{p+1}:=b$, so that $[a,b]=\cup_{i=0}^p [z_i, z_{i+1}]$.
For $i\in\{0,\dots, p\}$ set $h_i:=z_{i+1}-z_i$ and \begin{equation*}
    \overline\varepsilon_i:=\begin{cases}
         \max\left(\varepsilon_i, \varepsilon_{i+1}\right), & i\in\{1, \dots, p-1\},\\
         \varepsilon_1, & i=0,\\
         \varepsilon_p, & i=p+1,
    \end{cases}
\end{equation*}
and let $L_i:=L(s_{Z,\mathcal E}, [z_i, z_{i+1}])$ be the Lipschitz constant of $s_{Z,\mathcal E}$ in $[z_i, z_{i+1}]$.
Then for each $i\in\{0,\dots, p\}$, if  
\begin{align}\label{eq:bound_lip}
 \overline\varepsilon_i > L_i h_i,
\end{align}
then $s_{Z,\mathcal E}(x)>0$ for all $x\in[z_i,z_{i+1}]$.
\end{theorem}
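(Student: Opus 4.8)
The plan is to reduce the statement to a short contradiction argument carried out on each interval $[z_i,z_{i+1}]$ separately, using only two ingredients: the pointwise lower bounds imposed by \eqref{constrpspline} at the sampling points, and the fact that $s_{Z,\mathcal E}$, being a cubic ($C^2$) spline, is Lipschitz on every compact subinterval, so that $L_i=L(s_{Z,\mathcal E},[z_i,z_{i+1}])=\max_{x\in[z_i,z_{i+1}]}|s_{Z,\mathcal E}'(x)|<\infty$ is well defined.

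First I would fix $i\in\{0,\dots,p\}$, assume \eqref{eq:bound_lip} (that is, $\overline\varepsilon_i>L_ih_i$), and suppose toward a contradiction that $s_{Z,\mathcal E}(x_0)\le 0$ for some $x_0\in[z_i,z_{i+1}]$. I would then single out the endpoint of the interval that carries the relevant constraint: for $i\in\{1,\dots,p-1\}$ let $z_\star$ be whichever of $z_i,z_{i+1}$ attains $\overline\varepsilon_i=\max(\varepsilon_i,\varepsilon_{i+1})$, and for the two boundary intervals let $z_\star=z_1$ when $i=0$ and $z_\star=z_p$ when $i=p$ (the only endpoint of those intervals that belongs to $Z$, so that $\overline\varepsilon_0$ and $\overline\varepsilon_p$ reduce to $\varepsilon_1$ and $\varepsilon_p$ respectively). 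In every case $z_\star\in Z$ with associated forcing value $\overline\varepsilon_i$, so \eqref{constrpspline} gives $s_{Z,\mathcal E}(z_\star)\ge\overline\varepsilon_i$.

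Next, since $x_0$ and $z_\star$ both lie in $[z_i,z_{i+1}]$ and $|z_\star-x_0|\le h_i$, the Lipschitz estimate yields $s_{Z,\mathcal E}(z_\star)\le s_{Z,\mathcal E}(x_0)+L_i|z_\star-x_0|\le 0+L_ih_i$. Chaining this with the previous inequality and the hypothesis produces $\overline\varepsilon_i\le s_{Z,\mathcal E}(z_\star)\le L_ih_i<\overline\varepsilon_i$, which is absurd. Hence no such $x_0$ exists, i.e.\ $s_{Z,\mathcal E}(x)>0$ for all $x\in[z_i,z_{i+1}]$; since $[a,b]=\cup_{i=0}^p[z_i,z_{i+1}]$ and $i$ was arbitrary, this is exactly the asserted conclusion on each piece.

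I do not anticipate a genuine obstacle here: the argument is elementary once one notices that a constraint at a \emph{single} endpoint already forbids a nonpositive value inside the whole interval — which is precisely why a maximum (rather than a minimum) of $\varepsilon_i,\varepsilon_{i+1}$ is the correct quantity in the definition of $\overline\varepsilon_i$, making hypothesis \eqref{eq:bound_lip} as weak as possible. The only details worth spelling out are the finiteness of $L_i$ (immediate from $s_{Z,\mathcal E}\in C^2$ together with the mean value theorem) and the bookkeeping for the two boundary intervals, where $a$ and $b$ need not belong to $Z$ so that only one endpoint constraint is available. If one wants the conclusion in the fully constructive, coefficient-based form announced in the introduction, a minor refinement is to bound $L_i$ on $[z_i,z_{i+1}]$ by $\max_j|a_j|$ times a constant depending only on the (uniform) knot spacing, using that only boundedly many $B_j$ are nonzero there and that $\|B_j'\|_\infty$ is controlled by the knot step.
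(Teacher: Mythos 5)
Your argument is correct and is essentially the paper's own proof, merely recast as a contradiction: the paper derives $s(z)\geq \max(s_i,s_{i+1})-L_ih_i\geq\overline\varepsilon_i-L_ih_i>0$ directly from the same Lipschitz estimate and endpoint constraints, which is logically equivalent to your chain $\overline\varepsilon_i\leq s(z_\star)\leq L_ih_i<\overline\varepsilon_i$. Your handling of the boundary intervals (where only one endpoint lies in $Z$) and your observation that a single constrained endpoint suffices match the paper's treatment, and your use of $s(z_\star)\geq\overline\varepsilon_i$ is in fact slightly cleaner than the paper's equality $\max(s_i,s_{i+1})=\max(\varepsilon_i,\varepsilon_{i+1})$, which tacitly assumes the constraints are active.
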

\begin{proof}
Let $i\in\{0,\dots, p\}$. By definition of $L_i$ and $h_i$, for all $z_i\in Z$ and $z\in [z_{i},z_{i+1}]$ we have
\begin{equation*}
 s_i -s(z)
= s(z_i)-s(z)
\leq \left|s(z) - s(z_i)\right|
 \leq L_i |z-z_i| \leq L_i h_i,
\end{equation*}
and by rearranging this inequality we get
\begin{equation*}
s(z) \geq s_i - L_i h_i \;\;\forall z\in [z_{i},z_{i+1}],
\end{equation*}
and by the same argument, if $i\in\{1, \dots, p\}$, we have
\begin{equation*}
s(z) \geq s_{i+1} - L_i h_i \;\;\forall z\in [z_{i},z_{i+1}].
\end{equation*}
It follows that for if $i\in\{1, \dots, p\}$ we have
\begin{align*}
s(z) 
&\geq \max\left(s_i - L_i h_i, s_{i+1} - L_i h_i\right)
=\max\left(s_i, s_{i+1}\right) - L_i h_i
= \max\left(\varepsilon_i, \varepsilon_{i+1}\right) - L_i h_i\\
&= \overline\varepsilon_i - L_i h_i,\;\;\forall z\in [z_{i},z_{i+1}],
\end{align*}
while if $i\in\{0,p+1\}$ we have
\begin{align*}
s(z) &\geq s_1 - L_1 = \overline\varepsilon_1 - L_1 h_1,\;\;\forall z\in [z_{i},z_{i+1}]\\
s(z) &\geq s_p - L_p = \overline\varepsilon_p - L_p h_p,\;\;\forall z\in [z_{p},z_{p+1}].
\end{align*}
Thus for all $i\in\{0,\dots, p+1\}$ the condition $\overline\varepsilon_i > L_i h_i$ implies $s(z) > 0$ for $z\in[z_i, z_{i+1}]$, which proves the statement.
\qed
\end{proof}

\begin{remark}
Apart from this kind of a-posteriori verifications, we remark that the algorithm is based on an heuristic argument and that there are corner cases where a spline passes the controls of Algorithm~\ref{alg:update-Z-A} but has negative values. A theoretical guarantee on the non-negativity of the spline would thus need to rule out these cases, which
are difficult to control due to the complex interaction between the dicretization, the thresholds, and the regularization parameter.

However, given the controls of Algorithm~\ref{alg:update-Z-A} it should be expected that these corner cases would need to have large values and variations of the second derivative. This behavior should be prevented by a large enough regularization parameter $\lambda$.
\end{remark}

\section{Numerical experiments}
\label{sec:2}

In this section, we provide some numerical tests to evaluate the practical behaviour of the proposed algorithm compared to other state-of-the-art methods, on some synthetic test problems and a real-life test problem based on COVID-19 infection data. All experiments were
performed on a single laptop equipped with a Intel(R) Core(TM) Ultra 5 125U CPU @ 3.60GHz with 16 GB RAM, using MATLAB R2024a and Python 3.12. 

We performed three sets of experiments with different purposes: first, we compared our CP-spline method with the unconstrained P-spline and the P-spline with nonnegative coefficients, here denoted as NNP-spline; then, we evaluated the difference between the performances obtained by CP-spline algorithm and the method proposed in~\cite{HAUTECOEUR2020256}, where a different approach to perform projection onto nonnegative set is developed; finally, we performed experiments on bot synthetic and real-life datasets to compare the performance of the Python implementation of the proposed method with the method presented in \cite{NAVARROGARCIA2023}.\\
 
In our experiments we set ${\bf W}=I_m$, identity matrix,  in (\ref{pspline1}); moreover, for the sake of the practical implementation, we note that problem \eqref{pspline} with constraint \eqref{constrpspline} can be 
reformulated as follows
 
\begin{equation}\label{eq:reformulation}
  \begin{array}{rl}
 \underset{ a\in {\mathbb R}^n }{\argmin}  & \displaystyle \frac{1}{2} {\bf a}^T {\bf H} {\bf a}+{\bf a}^T {\bf f},   \\
 \mathrm{s.t.} & {\bf B}{\bf a}\geq \bm{\varepsilon}
\end{array}  %
\end{equation}
with $\bm{\varepsilon}=(\varepsilon_1, \ldots, \varepsilon_{n})$, ${\bf H}$ and ${\bf f}$ suitably defined by exploiting $ {\bf B}$ and $ {\bf D}$ in (\ref{expansion2}).
Indeed, the objective function in (\ref{pspline})  becomes
\[{\bf a}^T{\bf \tilde{B}}{\bf a}-2{\bf a}^T\tilde{{\bf b}}+ \lambda^2 {\bf a}^T {\bf T}{\bf a},\]
where
\begin{eqnarray}
b^{(i)}&=&(B_j(x_i))_{j=1,\ldots,n}\in {\mathbb R}^{n\times 1}, \nonumber \\
{ \tilde{B}^{(i)}}&=& { b^{(i)}\,{b^{(i)}}^T,\,i.e.,\,\,(\tilde{B}^{(i)})_{h,k}=B_k(x_i)B_h(x_i),}\nonumber 
\end{eqnarray}
and denoting 
\begin{eqnarray}
\tilde{b}&=&\sum_i y_ib^{(i)} \in {\mathbb R}^{n\times 1}\nonumber 
\end{eqnarray}
Moreover, by setting
\begin{eqnarray}
t_j(k)=\left\{\begin{array}{rl}
    1,\;\; & k=j-2, \\
    -2,\;\; & k=j-1,\\
    1,\;\; &k=j,\\
    0,\;\; & otherwise,
\end{array} \right. \qquad\left(t_j\in {\mathbb R}^{ n\times 1}\right)\end{eqnarray}
we define 
\begin{eqnarray}
{\bf T}&=&\sum_3^n t_jt_j^T\in {\mathbb R}^{n\times n}.\nonumber 
\end{eqnarray}
Finally, with
\begin{eqnarray}{\bf Q}&=&\sum_i \tilde{B}^{(i)} \in {\mathbb R}^{n\times n}
\nonumber 
\end{eqnarray}
we write ${\bf H}={\bf Q}+\lambda^2 {\bf T}$ and ${\bf f}=-2{\bf \tilde{b}}$, thus obtaining formulation \eqref{eq:reformulation}.
\\

The first sets of experiments were performed on MATLAB 2024a. In our MATLAB implementation of Algorithm~\ref{alg-CPspline}, problem \eqref{eq:reformulation} is solved by using the {\tt lsqlin} function, which allows one to solve linear least-squares with linear inequality constraints through an active-set method.

 {The  regularization parameter $\lambda$ in ${\bf H}$    is computed by
L-curve criterion. The Generalized Cross Validation (GCV) algorithm has also been tested with similar results. }
 


We considered five test problems, named TP1, TP2, TP3, TP4 and TP5, defined as follows:  

\begin{itemize}
\item TP1: $n = 15$ knots 
 {linearly spaced} in $\left[-20,\,20\right]$,  $m=50$ data points \\$(x_i,\tilde{Y}_i)_{i=1,\ldots,m}$ where, for any $i=1,\dots,m$, $x_i$ are randomly distributed, $\tilde{Y}_i = f(x_i) + \nu_i$ with $f(x) = \frac{1}{1.5\sqrt{2\pi}}e^{-\frac{1}{2}\frac{x^2}{1.5^2}}$, and the perturbation $\nu_i$ is obtained from a normal distribution with zero mean and standard deviation $1.0e-3$.
\item TP2: $n = 15$ knots  {linearly spaced} in $\left[1,\,3\right]$,  $m=30$ data points $(x_i,\tilde{Y}_i)_{i=1,\ldots,m}$ where, for any $i=1,\dots,m$, $x_i$ are 
 {linearly spaced}, $\tilde{Y}_i = f(x_i) + \nu_i$ with $f(x) = e^{\frac{-1}{\sin(6x)^2}} $, and the perturbation $\nu_i$ is obtained from a normal distribution with zero mean and standard deviation $0.015$.
\item TP3: $n = 9$ knots  {linearly spaced} in $\left[0,\,10\right]$, $m=100$ data points $(x_i,{Y}_i)_{i=1,\ldots,m}$ where, for any $i=1,\dots,m$, $x_i$ are 
 {linearly spaced}, and ${Y}_i = f(x_i)$ are obtained from a cubic spline interpolation.
\item TP4: $n = 9$ knots  {linearly spaced} in $\left[0,\,2\pi\right]$,  $m=100$ data points $(x_i,{Y}_i)_{i=1,\ldots,m}$ where, for any $i=1,\dots,m$, $x_i$ are 
 {linearly spaced}, and ${Y}_i = f(x_i)$ with $f(x) = \sin(x) + 0.9$.
\item TP5: $n = 9$ knots  {linearly spaced} in $\left[0,\,5\right]$,  $m=100$ data points $(x_i,{Y}_i)_{i=1,\ldots,m}$ where, for any $i=1,\dots,m$, $x_i$ are 
 {linearly spaced}, and ${Y}_i = f(x_i)$ with $f(x) = e^{-x}\cos(x)$.

\end{itemize}

The main loop in Algorithm \ref{alg-CPspline} is stopped when the computed solution is nonnegative within a maximum number of 20 iterations. Moreover, a safeguarding condition is included to control the number of points in $Z$ to avoid an excessive increasing of its size. In Algorithm \ref{alg:3} we set  $\mu=5$ and  {$\bar \epsilon=\mu \cdot \epsilon$}  
whereas in Algorithm \ref{alg:updateE},  $\epsilon=10^{-8}$.

 The results obtained on these test problems are reported in Figures~\ref{fig:2}-\ref{fig:4}.
 In Figure \ref{fig:2}, we evaluated the ability of CP-spline algorithm to approximate a perturbed Gaussian distribution (TP1) and data points generated as perturbation of an oscillating positive function (TP2), comparing the results with those obtained by using the basic P-spline regression and P-spline with nonnegative coefficients. From these plots, we can observe how the CP-spline algorithm can better approach the peaks, resulting in a more faithful approximation of the original function (starting from few data points), compared to the other methods. Also, CP-spline seems to better  capture the extreme values in the nonnegative regions. This is confirmed by the RMSE values, which are given in Table \ref{tab:1}.
  {In Figure \ref{fig:2}, we also show the effect of the strategy on test problems TP1 and TP2, by plotting the new data points (red circles).}

 Figure~\ref{fig:4} shows a comparison between our CP-spline method and the projection method proposed in~\cite{HAUTECOEUR2020256}. These plots highlight the local nature of the CP-spline method that, indeed, can foster then nonnegativity of the function~$f$ while preserving the original graph in the feasible points where the constraints have no effect; in contrast, the projection method derived from~\cite{HAUTECOEUR2020256} seems to cause a loss of the original information, thus producing a global action on the function $f$. This visual insight is confirmed also by the root mean squared errors (RMSE) reported in Table \ref{tab:2}.\newline

 \begin{table}[h]
\caption{Root mean squared errors (RMSE) computed on the $m$ data points of test problems TP1 and TP2.}\label{tab:1} 
\begin{tabular*}{\textwidth}{@{\extracolsep\fill}lccr}
\hline\noalign{\smallskip}
 &   P-spline & NNP-spline & CP-spline\\
\noalign{\smallskip}\hline\noalign{\smallskip}
TP1 & $4.62\cdot 10^{-4}$ & $5.04\cdot 10^{-4}$ & $1.02\cdot 10^{-4}$ \\
TP2 & $8.2\cdot 10^{-3}$  & $8.68\cdot 10^{-3}$ & $4.39\cdot 10^{-3}$\\
\noalign{\smallskip}\hline
\end{tabular*}
\end{table}

\begin{table}[h]
\caption{Root mean squared errors (RMSE) computed on the $m$ data points of test problems TP3, TP4 and  TP5.}\label{tab:2} 
\begin{tabular*}{\textwidth}{@{\extracolsep\fill}lcr}
\hline\noalign{\smallskip}
 &   Projection \cite{HAUTECOEUR2020256} & CP-spline \\ 
\noalign{\smallskip}\hline\noalign{\smallskip}
TP3 &  $8.56\cdot 10^{-2}$ & $8.56\cdot 10^{-2}$\\
TP4 & $1.33\cdot 10^{-2}$  & $1.16\cdot 10^{-2}$\\
TP5 & $7.57\cdot 10^{-2}$  & $4.69\cdot 10^{-3}$\\
\noalign{\smallskip}\hline
\end{tabular*}
\end{table}

 {
The effectiveness and convergence of the procedure have been tested through
repeated experiments for TP1 and TP2, to assess the performance of the algorithm over 100 realizations of the error on the data points (see Figure \ref{figboxplot}).}

\begin{figure}[h]
\centering
\begin{tabular}{cc}
\includegraphics[width=0.5\textwidth]{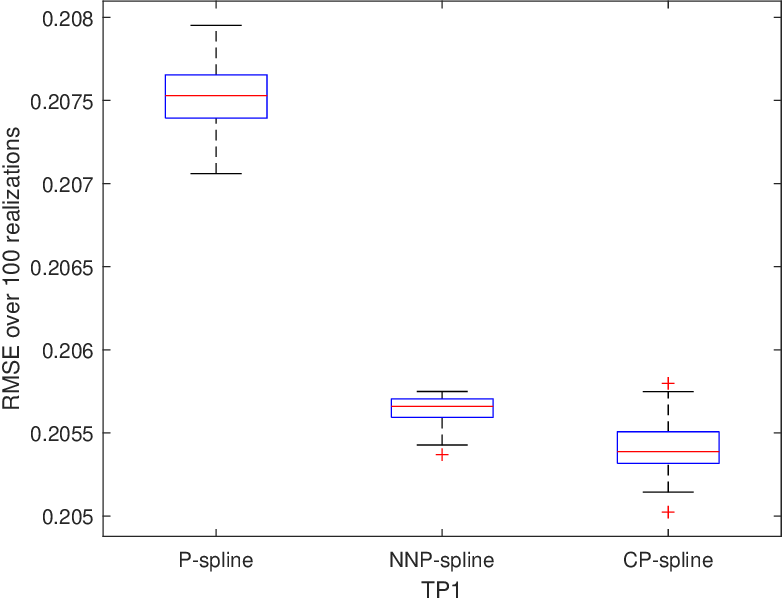} &
\includegraphics[width=0.5\textwidth]{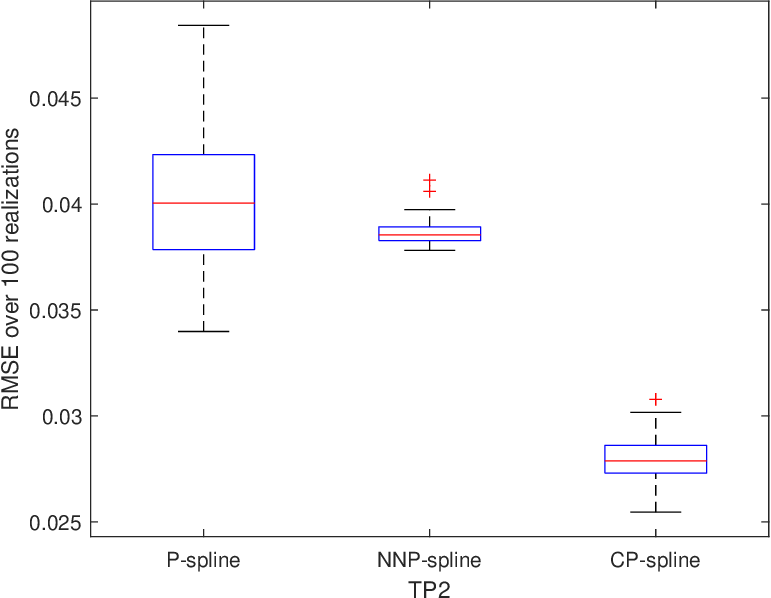}
\end{tabular}
\caption{Boxplot distributions of the RMSE values over 100 random realizations of the error on the data for TP1 (left panel) and TP2 (right panel).\label{figboxplot}}
\end{figure}



 \begin{figure}[h]
\includegraphics[width=0.49\textwidth]{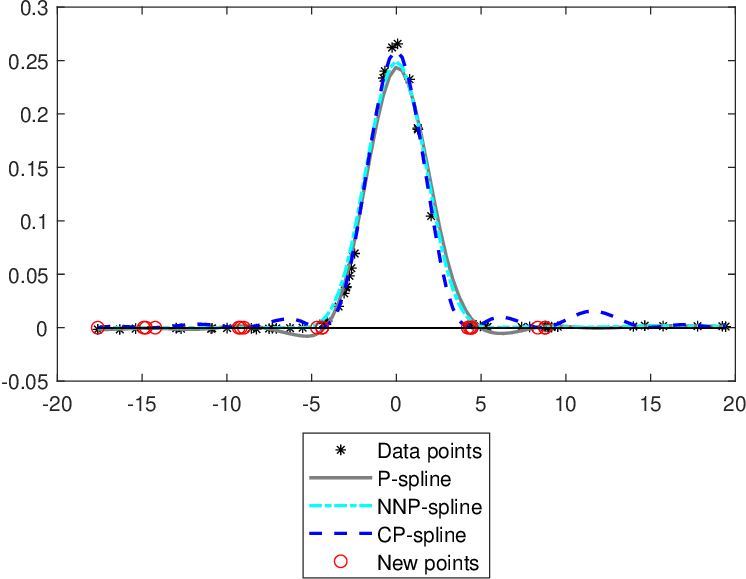}
\includegraphics[width=0.48\textwidth]{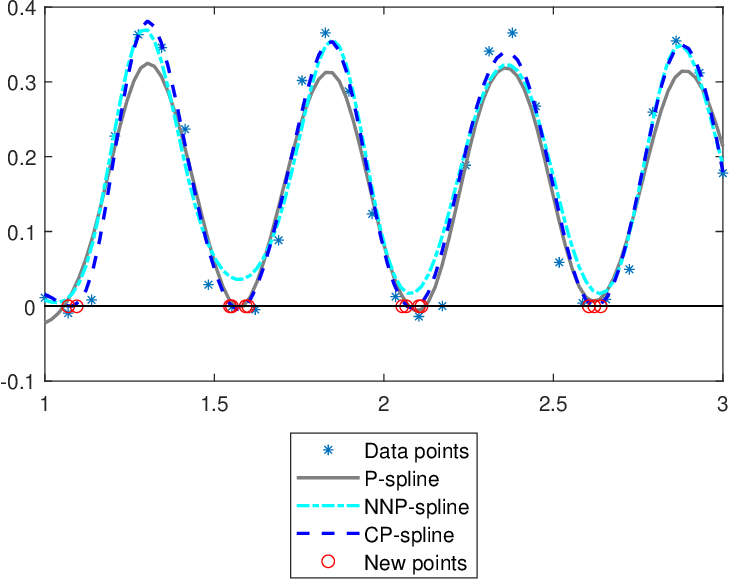}
 \caption{Numerical results on synthetic test problem TP1 (left) and TP2 (right). \label{fig:2}}
 \end{figure}


 \begin{figure}[ht!]
     \centering
\includegraphics[width=0.49\textwidth]{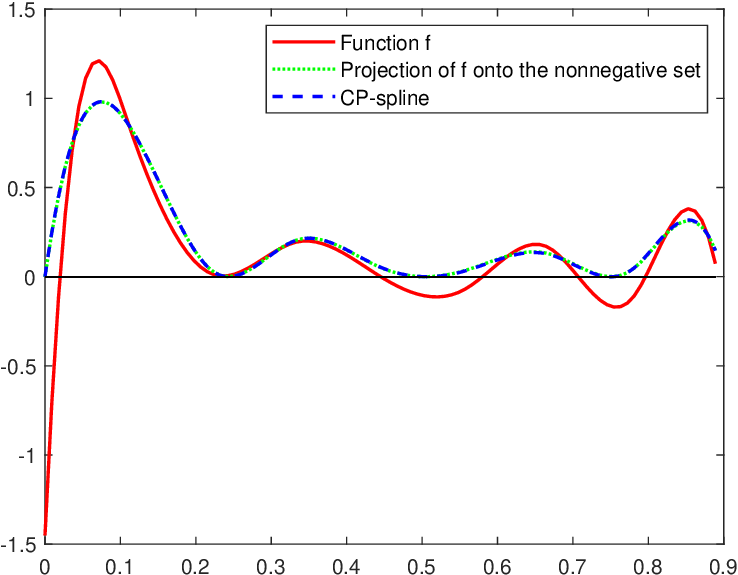}  \includegraphics[width=0.49\textwidth]{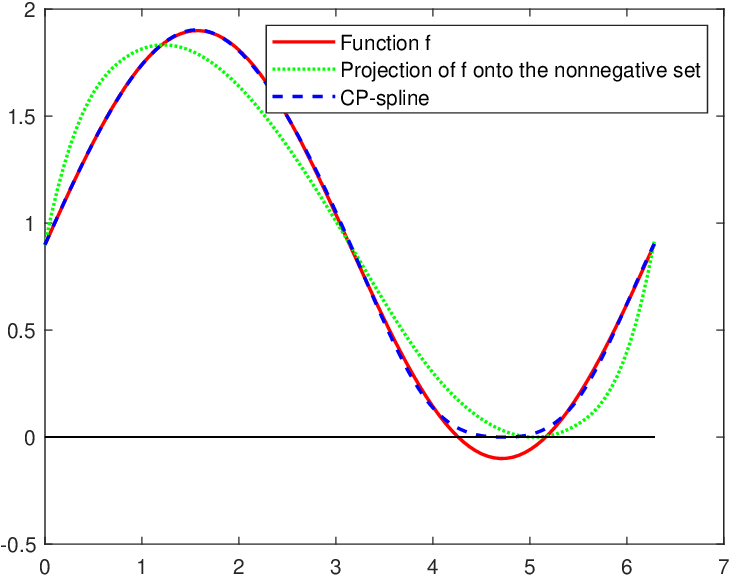}\\
\begin{center}
    \includegraphics[width=0.49\textwidth]{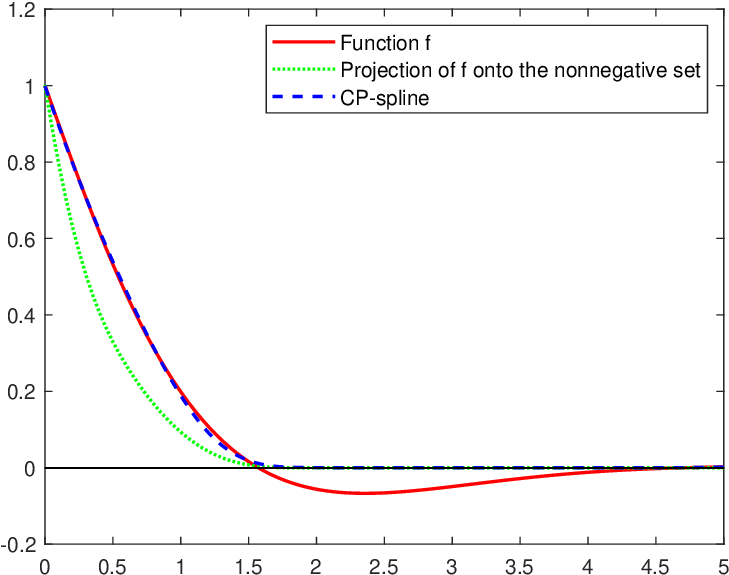}
\end{center}
  \caption{Numerical results on synthetic test problem TP3 (top left), TP4 (top right), and TP5 (bottom).}
 \label{fig:4}
 \end{figure}

\subsection{Comparison with other software}
To conclude the numerical experiments section we present a comparison we performed on the \texttt{cpsplines} library\footnote{https://github.com/ManuelNavarroGarcia/cpsplines}, based on the work in \cite{NAVARROGARCIA2023}.
The comparison is relavant since the \texttt{cpsplines} library exploits a reformulation of the problem to a second-order cone program leading to theoretical nonnegativity guarantees.

Since the library was written in Python language, to ensure a fair comparison in terms of computational times, we built our own Python implementation of the algorithm presented in this work. Before presenting the results of this test it is important to highlight some details in the Python implementation:
\begin{itemize}
    \item for the selection of the optimal regularization parameter we made use of the Python \texttt{lcurve} library\footnote{https://github.com/eric-brandao/lcurve};
    \item to use the lcurve method, a GSVD must be computed first; since a function for GSVD was not available naither in NumPy nor in SciPy linear algebra modules, we created our own implementation of the GSVD, following the same algorithm used in MATLAB's \texttt{gsvd} function;
    \item to solve the linearly-constrained least-squares problems arising at each iteration of the proposed algorithm we resorted to the \texttt{quadprog} library\footnote{https://github.com/quadprog/quadprog}. 
\end{itemize}

For this comparison we considered the five problems used in the previous experiments to which we added two problems derived from the \texttt{cpsplines} library:
\begin{itemize}
    \item GAUSS: approximation of simulated data affected by Gaussian noise of the form
$$y_i= f(x_i)+\nu_i,$$
where $x_i=0,1,2,\ldots,200$, 
    $$f(x) = e^{\left(4-\frac{x}{25}\right)}+4\,\cos\left(\frac{x}{8}\right),$$
 and the perturbation $\nu_i$ is obtained from a normal distribution with zero mean and standard deviation $3$
    \item COVID: smoothing of the normalized number of infections in Comunitat Valenciana over 586 days starting from the first event date to 4th October 2021.
\end{itemize}

We ran the two methods on each istance for 100 times and measured mean and standard deviation of the execution times. Additionally, we measured the relative distance in $\ell_\infty$ norm between the solution computed by the two methods. We report all the results in Table~\ref{tab:confronto_python} and in Figure~\ref{fig:Python_Approx_Comparison} the approximations obtained by the two methods for problems TP1, TP5, GAUSS, and COVID.

\begin{table}[h]
	\caption{Mean and standard deviation of the execution times for the two Python-based codes with distance between the solutions. Best time and smallest standard deviation have been highlighted in bold and italic fonts, respectively.}\label{tab:confronto_python} 
	\begin{tabular*}{\textwidth}{@{\extracolsep\fill}lccccc}
		\hline\noalign{\smallskip}
		&   \multicolumn{2}{c}{CP-spline} & \multicolumn{2}{c}{\texttt{cpsplines} library} & $\ell_\infty-$distance \\ 
		&   mean time (s) & std & mean time (s) & std & \\ 
		\noalign{\smallskip}\hline\noalign{\smallskip}
		TP1   &  \textbf{0.065} & \textit{0.021} &  0.103 & \textit{0.021} & $1.52\cdot 10^{-1}$\\
		TP2   & \textbf{0.103}  & \textit{0.017} &  0.190 & 0.038 & $3.39\cdot 10^{-2}$\\
		TP3   &  \textbf{0.112} & \textit{0.017} &  0.117 & 0.046 & $4.49\cdot 10^{-3}$\\
		TP4   & \textbf{0.155}  & \textit{0.017} &  0.157 & 0.044 & $8.55\cdot 10^{-3}$\\
		TP5   & 0.346  & \textit{0.034} &  \textbf{0.154} & 0.044 & $8.73\cdot 10^{-3}$\\
		GAUSS & 0.329  & \textit{0.035} &  \textbf{0.325} & 0.055 & $4.49\cdot 10^{-3}$\\
		COVID & \textbf{1.069}  & \textit{0.084} &  6.397 & 1.23 & $1.13\cdot 10^{-2}$\\
		\noalign{\smallskip}\hline
	\end{tabular*}
\end{table}

From the table, one can clearly observe that the proposed method in general presents lower or comparable execution times than the ones from the \texttt{cpsplines} and has always a smaller standard deviation. From the table one can clearly see that the computational performance difference between the two algorithms is particularly evident in two cases: TP5, in which \texttt{cpsplines} is 3 times faster, and COVID, in which our method is 5 times faster. This doesn't come completely unexpected. Indeed, if one observes the two problems one can see that they correspond to two extreme opposite scenarios (see Figure~\ref{fig:Python_Approx_Comparison}).

The data in problem TP5 present a long tail approaching the x-axis. This forces the proposed method to run for 12 iterations to identify 6 well-placed sample point leading to the resulting nonnegative spline. Problem COVID has sufficiently positive data and the only region in which control is needed is at the origin, where our method had to run 3 iterations to identify 3 sample points. It is worth recalling that our method solves a linearly-constrained quadratic programming problem at each iteration and the number of constraints at the beginning is equal to the number of data points (since we use data points to initialize the vector of sample points). On the contrary, the \texttt{cpsplines} library solves a single second-order cone program through the MOSEK\footnote{https://www.mosek.com/} interior-point solver, which, however, can become very expensive when the problem size increases.

\begin{figure}[ht!]
\centering
\begin{tabular}{cc}
\includegraphics[width=0.5\textwidth]{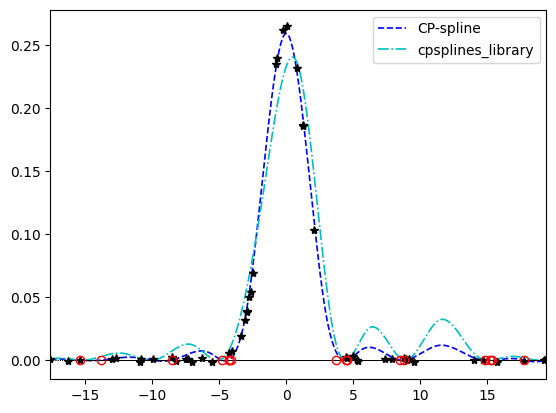} &
\includegraphics[width=0.5\textwidth]{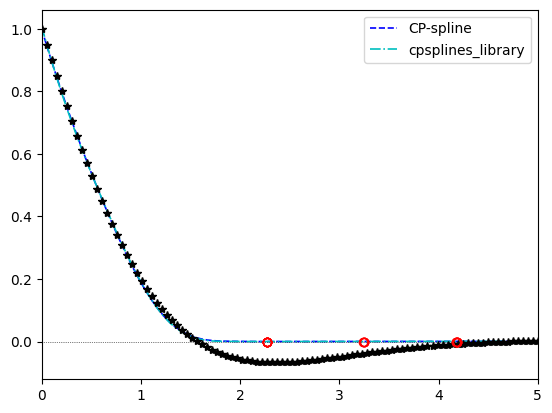}\\
\includegraphics[width=0.5\textwidth]{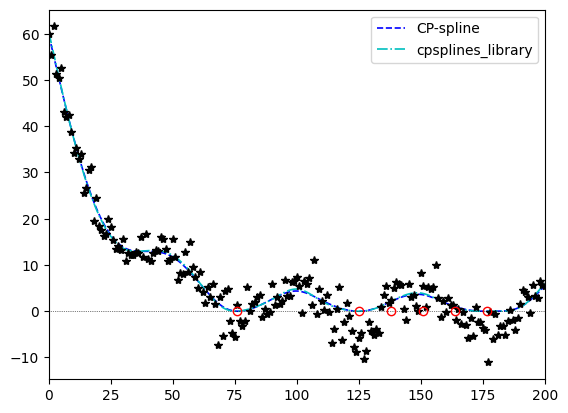} &
\includegraphics[width=0.5\textwidth]{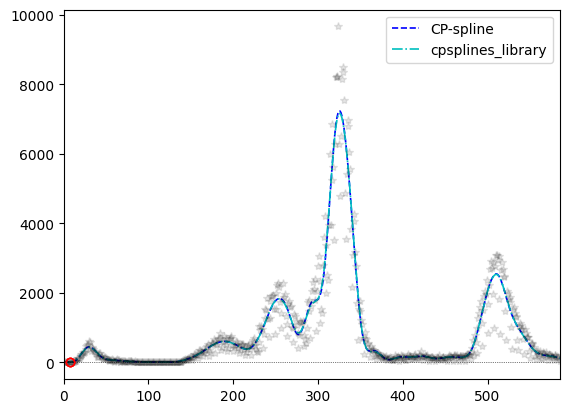}
\end{tabular}
\caption{Non-negative splines obtained from the Python libraries for problems TP1 (top left), TP5 (top right), GAUSS (bottom left) and COVID (bottom right). Red circles represent sample points for the CP-spline approximation.\label{fig:Python_Approx_Comparison}}
\end{figure}

Another peculiar case from Table~\ref{tab:confronto_python} is TP1, where the distance between the two approximations is two orders of magnitude higher with respect to other cases. It is interesting to observe the two curves to understand why it is like this. This can be done by looking at the top-left tile of Figure~\ref{fig:Python_Approx_Comparison}. In fact, one can observe that the approximation built from the proposed method better approximates the original data, which is a symptom of the potential practical benefits of the ``local'' constraints in the proposed method over the global nonnegativity constraint imposed in the \texttt{cpsplines} formulation.

Overall, the results suggest that, despite the lack of a theoretical guarantee of positiveness
of the solution, the proposed heuristic strategy can compute splines
which compare well qualitatively with the ones from \texttt{cpsplines}  and require a lower
computational effort. 

 \section{Conclusions}\label{sec:3}

  In our article, we addressed the problem of designing smooth curves with specific requirements on the variability range in the general framework of data approximation models. An algorithm is proposed to construct a constrained P-spline model, where such requirements are forced to be satisfied by explicitly imposing model-bound constraints on a set of dynamically updated sampling points. The update heuristic strategy is based on a first-order local approximation. 
Appreciable numerical results seem to encourage further investigations and developments of our algorithm. From a theoretical point of view, in the direction of finding valid results regarding the (possibly finite) convergence of the method. From a practical point of view, developing ad-hoc optimization algorithms for the solution of the least-squares minimization problems at each iteration of the proposed algorithm to reduce the computational cost of the method in cases in which a large number of iterations is required.


\section*{Acknowledgements}
The italian authors are members of the INdAM research group GNCS, which has partially supported this work. This research has been accomplished within RITA (Research ITalian network on  Approximation) and UMI-TAA groups, of which the corresponding author is a member. This work has been partially
supported also   by
the project COMETA - COmputational METhods in data analysis (CUP
B63C23000650005) of University of Campania ``Luigi Vanvitelli''. 
 {We acknowledge financial support under the National Recovery and Resilience Plan (NRRP), Mission 4, Component 2, Investment 1.1, Call for tender No. 104 published on February 02, 2022 by the Italian Ministry of University and Research (MUR), funded by the European Union - NextGenerationEU- Project Title ``Numerical Optimization with Adaptive Accuracy and Applications to Machine Learning'' - CUP E53D23007700006 - Grant Assignment Decree No. 764 adopted on June 05, 2023 by the Italian Ministry of University and Research (MUR).
We also acknowledge   financial support under the National Recovery and Resilience Plan (NRRP), Mission 4, Component 2, Investment 1.1, Call for tender No. 1409 published on September 14, 2022 by the Italian Ministry of University and Research (MUR), funded by the European Union – NextGenerationEU– Project Title ``A multidisciplinary approach to evaluate ecosystems resilience under climate change'' – CUP B53D23027910001 - Grant Assignment Decree No.1379,    adopted on
September 01, 2023 by the Italian Ministry of  University and Research (MUR).}

\section*{Declarations}
{\bf Conflict of interest} 
The authors declare that they have no conflict of interest.
  

\end{document}